\newif\ifelsarticle
\newtheorem{theorem}{Theorem}
\newtheorem{lemma}[theorem]{Lemma}
\newtheorem{proposition}[theorem]{Proposition}
\newtheorem{corollary}[theorem]{Corollary}
\theoremstyle{definition}
\newtheorem{definition}[theorem]{Definition}
\newtheorem{example}[theorem]{Example}
\newtheorem{remark}[theorem]{Remark}
\renewcommand{\P}{\mathcal{P}}
\newcommand{\Pc}{\overline{\mathcal{P}}}
\newcommand{\PP}{\mathcal{P}^2}
\newcommand{\PPc}{\overline{\mathcal{P}}^2}
\newcommand{\D}{\mathcal{D}}
\newcommand{\Sets}{\mathbf{Sets}}
\newcommand{\SDD}{\mathcal{S}}
\newcommand{\ZSDD}{\mathcal{Z}}
\newcommand{\BDDI}{\beta}
\newcommand{\ZDDI}{\zeta}
\newcommand{\SDDI}{\sigma}
\newcommand{\ZSDDI}{\xi}
\newcommand{\emptycomb}{\varepsilon}
\newcommand{\zeroterm}{\mathbf{0}}
\newcommand{\oneterm}{\mathbf{1}}
\newcommand{\true}{\mathit{true}}
\newcommand{\false}{\mathit{false}}
\newcommand{\CatC}{\mathscr{C}}
\newcommand{\TO}{\mathbf{TO}}
\newcommand{\VTree}{\mathbf{VTree}}
\title{BDDs Naturally Represent Boolean Functions, and ZDDs Naturally Represent Sets of Sets}
\author{Kensuke Kojima}
\begin{document}

\ifelsarticle
\input{front_elsarticle}
\else
\maketitle
\begin{abstract}
  This paper studies a difference between Binary Decision Diagrams (BDDs) and Zero-suppressed BDDs (ZDDs) from a conceptual point of view.
  It is commonly understood that a BDD is a representation of a Boolean function, whereas a ZDD is a representation of a set of sets.
  However, there is a one-to-one correspondence between Boolean functions and sets of sets, and therefore we could also regard a BDD as a representation of a set of sets, and similarly for a ZDD and a Boolean function.
  The aim of this paper is to give an explanation why the distinction between BDDs and ZDDs mentioned above is made despite the existence of the one-to-one correspondence.  To achieve this, we first observe that Boolean functions and sets of sets are equipped with non-isomorphic functor structures, and show that these functor structures are reflected in the definitions of BDDs and ZDDs.  This result can be stated formally as naturality of certain maps.  To the author's knowledge, this is the first formally stated theorem that justifies the commonly accepted distinction between BDDs and ZDDs.  In addition, we show that this result extends to sentential decision diagrams and their zero-suppressed variant.
\end{abstract}

\fi

\section{Introduction}
\label{sec:intro}


In this paper, we study a difference between two types of decision diagrams: Binary Decision Diagrams (BDDs, for short)~\cite{Lee59,Akers78}, and Zero-suppressed BDDs (ZDDs, for short)~\cite{Minato:1993:ZBS:157485.164890}.
It is commonly understood that a BDD is a representation of a Boolean function (a function that takes several, fixed number of Boolean values and returns a Boolean value), and a ZDD is that of a combination set (a family of subsets of a fixed set).  This fact implies that Boolean functions and combination sets are considered different.  However, it is easy to find a one-to-one correspondence between them, and by using this correspondence, one may regard a representation of a Boolean function as that of a combination set, and vice versa.  This would mean that, contrary to the claim above, both BDDs and ZDDs can be used to represent both Boolean functions and combination sets.  This argument leads to the following question: what distinguishes BDDs (or Boolean functions) and ZDDs (or combination sets)?  Why are they considered different, despite the existence of the one-to-one correspondence?  The aim of this paper is to answer this question from a conceptual point of view.



An informal explanation for this has already been given when ZDDs are proposed~\cite{Minato:1993:ZBS:157485.164890}.
If a certain variable does not appear in a representation of a Boolean function, this variable is not used by the function, and therefore its value should be irrelevant to the output of the function.
This means that, if a combination set is represented by a BDD by identifying it with the corresponding Boolean function, then the representation changes when we extend the set of elements (variables) being considered.  This is because extra elements have to be explicitly excluded.  For example, if $a$ and $b$ are the only elements being considered, a combination set $\{\{a\}, \{b\}\}$ can be expressed by the Boolean function $a \oplus b$ (where $\oplus$ is XOR), but if another element $c$ is present in the context, the same combination set should be represented by $(a \oplus b) \land \neg c$; an extra element $c$ has to be excluded explicitly.  Minato~\cite{Minato:1993:ZBS:157485.164890} pointed out that this dependence on extra elements as an inconvenience in representing a combination set using a BDD, and addressed this problem by introducing a new representation, a ZDD, that is not affected by extra elements.


The goal of this paper is to reformulate this explanation as a more formally stated theorem.
Our formulation uses the language of category theory.  We first observe that Boolean functions and combination sets are equipped with essentially different functor structures: although they are in one-to-one correspondence, the bijection cannot be a natural isomorphism.  We next show that the difference between those functor structures
are reflected in the definition of BDDs and ZDDs.  This can be formally stated as the naturality of the semantics of BDDs and ZDDs (here, a semantics is given by a function that receives a decision diagram and returns the mathematical object it represents).
In this sense, a BDD is a natural representation of a Boolean function (but not a combination set), whereas a ZDD is that of a combination set (but not a Boolean function).
In addition, we also consider \emph{Sentential Decision Diagrams (SDDs)}~\cite{Darwiche:2011:SNC:2283516.2283536} and \emph{Zero-suppressed SDDs (ZSDDs)}~\cite{Nishino:2016:ZSD:3015812.3015969}, which extend BDDs and ZDDs, respectively, and show that analogous result holds for these data structures as well.
We believe that our results provide a better understanding of these data structures by uncovering a mathematical structure behind them.

The rest of this paper is organized as follows.
Section~\ref{sec:csbf} introduces several basic notions and notations.
Section~\ref{sec:bdd} introduces BDDs and ZDDs, their semantics, and shows that the semantics are natural with respect to appropriate functor structures.  Both unordered and ordered cases are discussed.
Section~\ref{sec:sdd} discusses SDDs and ZSDDs.  After introducing their definitions, a naturality result similar to the previous section is proved.  The notions of vtrees and partition are also discussed.
Finally, Section~\ref{sec:concl} concludes the paper.


\section{Combination Sets and Boolean Functions}
\label{sec:csbf}

\begin{definition}
  \label{def:comb}
  Let $X$ be a set.  A \emph{combination} over $X$ is just a subset of $X$, and a \emph{combination set} over $X$ is a set of combinations over $X$.  In other words, a combination and a combination set are elements of $\P(X)$ and $\PP(X)$, respectively.
\end{definition}

\begin{example}
  Let $G$ be a graph, and $E$ the set of all edges in $G$.
  Then, the set of all paths between two fixed nodes $s$ and $t$ is an example of a combination set over $E$, because a path is represented by a subset of $E$.  Similarly, the set of all spanning trees of $G$ and the set of all Hamiltonian circuits of $G$ are combination sets over $E$.
\end{example}

If no confusion arises, we denote a combination by a sequence.  For example, we write $abc$ for $\{a, b, c\}$, and $\{a, b, ab\}$ for $\{\{a\}, \{b\}, \{a, b\}\}$.  We also use $\emptycomb$ for the empty combination, although formally it is the same as $\emptyset$, to emphasize that we regard it as a combination.

\begin{definition}
  Let $X$ be a set.  A \emph{Boolean function} over $X$ is an element of $2^{2^X}$, where $2 = \{0, 1\}$.
\end{definition}

It is well-known that, for finite $X$, a Boolean function over $X$ can always be represented by a Boolean formula whose atoms are taken from $X$.  For example, if $a, b \in X$, then $a \lor b$ denotes the function which takes $f \in 2^X$ and returns $1$ if and only if either $f(a) = 1$ or $f(b) = 1$.  Below we sometimes use formulas to represent Boolean functions.

There is a one-to-one correspondence between combination sets and Boolean functions.

\begin{proposition}
  \label{prop:cs-to-bf}
  A map $\tau_X\colon \PP(X) \to 2^{2^X}$ defined below is a bijection:
  \begin{equation*}
    \tau_X(P)(f) =
    \begin{cases}
      1 & f^{-1}(1) \in P, \\
      0 & f^{-1}(1) \notin P.
    \end{cases}
  \end{equation*}
\end{proposition}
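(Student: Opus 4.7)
The plan is to establish this bijection by exhibiting an explicit inverse and verifying that the two compositions yield the identity. Conceptually, the statement factors through the familiar correspondence between subsets and their characteristic functions, so I expect no real obstruction.

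First I would define the candidate inverse $\tau_X^{-1}\colon 2^{2^X} \to \PP(X)$ by
\begin{equation*}
  \tau_X^{-1}(g) = \{\, f^{-1}(1) \mid f \in 2^X,\ g(f) = 1 \,\}.
\end{equation*}
This is a well-defined element of $\PP(X)$ since every $f^{-1}(1)$ is a subset of $X$, hence an element of $\P(X)$.

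Next I would verify $\tau_X^{-1} \circ \tau_X = \mathrm{id}_{\PP(X)}$ and $\tau_X \circ \tau_X^{-1} = \mathrm{id}_{2^{2^X}}$. For the first, given $P \in \PP(X)$, unfolding the definitions yields $\tau_X^{-1}(\tau_X(P)) = \{f^{-1}(1) \mid f \in 2^X,\ f^{-1}(1) \in P\}$, which equals $P$ because the map $f \mapsto f^{-1}(1)$ is a bijection $2^X \to \P(X)$, so every element of $P \subseteq \P(X)$ is hit exactly once. For the second, given $g \in 2^{2^X}$ and any $f \in 2^X$, we have $\tau_X(\tau_X^{-1}(g))(f) = 1$ iff $f^{-1}(1) \in \tau_X^{-1}(g)$ iff there exists $f' \in 2^X$ with $g(f') = 1$ and $(f')^{-1}(1) = f^{-1}(1)$; by injectivity of $f \mapsto f^{-1}(1)$ this forces $f' = f$, so the condition reduces to $g(f) = 1$.

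The only step that requires any care is the appeal to the bijectivity of $f \mapsto f^{-1}(1)$, which is a standard fact and could be invoked without proof; the rest is bookkeeping of definitions. Since nothing here is nontrivial, I would keep the exposition short, essentially just writing down $\tau_X^{-1}$ and remarking that the two composites collapse immediately.
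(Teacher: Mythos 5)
Your proof is correct, and it follows essentially the same route the paper implicitly relies on: the paper leaves the proposition unproved, remarking only that $\tau_X$ is induced from the standard isomorphism $\Pc(X) \simeq 2^X$ given by $f \mapsto f^{-1}(1)$, which is exactly the bijection your explicit inverse and both composite checks reduce to.
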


Despite the existence of a bijective correspondence, we distinguish combination sets and Boolean functions as essentially different objects, by introducing different functor structures.

Below we denote by $\P$ and $\Pc$ the \emph{covariant} and \emph{contravariant} power set functors, respectively (that is, for $f\colon X \to Y$, their morphism parts are given by $\P(f)(A) = f(A)$ for $A \in \P(X)$, and $\Pc(f)(B) = f^{-1}(B)$ for $B \in \P(Y)$).

We call $\PP = \P \circ \P$ the \emph{combination sets functor}.  It is easy to see that there is a natural isomorphism $\Pc(X) \simeq 2^X$, and $\tau_X$ above is in fact induced from this isomorphism: $\tau_X\colon \PPc(X) \to 2^{2^X}$.  For this reason, we call $\PPc$ the \emph{Boolean functions functor}, and identify $\PPc(X)$ and $2^{2^X}$ from now on.

It is easy to check that $\tau_X$ above is not natural.  Moreover, we can prove that there is no natural isomorphism between $\PP$ and $\PPc$.  In this sense, combination sets and Boolean functions are equipped with different structures, and the difference originates from the two distinct functor structures on power sets.

\begin{proposition}
  $\PP$ and $\PPc$ are not isomorphic as functors.
\end{proposition}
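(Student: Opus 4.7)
The plan is to derive a contradiction by comparing how $\PP$ and $\PPc$ act on a single well-chosen non-invertible morphism, namely the unique map $!\colon 2 \to \{\ast\}$. Since $\PP(X)$ and $\PPc(X)$ are equinumerous for finite $X$, a pointwise bijection $\alpha_X$ always exists; any obstruction must therefore come from the action on non-identity maps between different objects.

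First I would record the following invariant. If $\alpha\colon \PP \Rightarrow \PPc$ is a natural isomorphism and $f\colon X \to Y$ is any map, then for every $Z \in \PP(Y)$ the bijection $\alpha_X$ restricts to a bijection between $\PP(f)^{-1}(Z)$ and $\PPc(f)^{-1}(\alpha_Y(Z))$. Since $\alpha_Y$ is itself a bijection, the multiset of fibre cardinalities of $\PP(f)$ must equal that of $\PPc(f)$. This is the only property of $\alpha$ I would exploit.

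I would then compute these multisets for $f = !\colon 2 \to \{\ast\}$. On the one hand, $\PP(!)(A) = \{\,!(B) : B \in A\,\}$, and its value in the four-element set $\PP(\{\ast\})$ is determined purely by whether $\emptyset \in A$ and whether $A$ contains at least one of the three non-empty subsets $\{0\}, \{1\}, \{0,1\}$; counting gives fibre sizes $1, 1, 7, 7$. On the other hand, $\PPc(!)(A) = \{\,S \in \P(\{\ast\}) : !^{-1}(S) \in A\,\}$ is determined by whether $\emptyset \in A$ and whether $\{0,1\} \in A$, and conditioning on these two independent bits partitions $\PP(2)$ into four fibres of size $4$. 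Since $\{1,1,7,7\}$ and $\{4,4,4,4\}$ differ as multisets, no natural isomorphism can exist.

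The computation is completely routine, so the main conceptual obstacle is realising that one must test naturality against non-invertible maps. Indeed, for any bijection $\sigma\colon X \to X$ one can check directly that $\PP(\sigma) = \PPc(\sigma)$ as functions on $\PP(X) = \PPc(X)$, so comparing values on isomorphisms alone cannot possibly distinguish the two functors; the asymmetry between taking images and taking preimages only becomes visible on maps that collapse information, of which $!\colon 2 \to \{\ast\}$ is the smallest nontrivial example.
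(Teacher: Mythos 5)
Your proof is correct, and it takes a genuinely different route from the paper's. The paper argues directly about an arbitrary natural transformation $\alpha\colon \PPc \to \PP$: testing naturality against the inclusion $\emptyset \to \{x,y\}$ and the collapsing map $r(x)=r(y)=x$, it deduces $\alpha_{\{x,y\}}(\P(\{x,y\})) = \alpha_{\{x,y\}}(\{\emptycomb, x, xy\})$, so no component can even be injective; non-isomorphism follows. You instead extract a numerical invariant: a natural isomorphism must carry fibres of $\PP(f)$ bijectively onto fibres of $\PPc(f)$ (via naturality of $\alpha$ and of its inverse, or by your counting argument using finiteness), so the multisets of fibre cardinalities must agree, and for the single non-invertible map $!\colon \{0,1\} \to \{\ast\}$ they are $\{1,1,7,7\}$ versus $\{4,4,4,4\}$ --- your computations of both actions check out. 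Each approach buys something: the paper's element-chase needs no counting, works with the smallest possible witnesses, and actually proves slightly more (every natural transformation $\PPc \to \PP$ has a non-injective component, not merely that no natural isomorphism exists), while your argument isolates a clean isomorphism invariant, needs only one test morphism, and comes with the illuminating observation that $\PP(\sigma) = \PPc(\sigma)$ on all bijections $\sigma$, which explains why any proof must probe a non-invertible map. The only step you should spell out in a final write-up is the one-line justification that the inverse of a natural isomorphism is itself natural (or, as you implicitly do, that injectivity on each fibre plus equal total cardinality forces the fibre-wise bijection); as stated it is a standard fact and not a gap.
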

\begin{proof}
  Let $X = \{x, y\}$ (where $x \ne y$), and consider the inclusion $i \colon \emptyset \to X$ and $r \colon X \to X$ defined by $r(x) = r(y) = x$.  We prove that if $\alpha\colon \PPc \to \PP$ is natural, then $\alpha_X(\P(X)) = \alpha_X(\{\emptycomb, x, xy\})$, and therefore $\alpha_X$ is not injective.  Consider the following commutative diagram.
  \begin{equation*}
    \xymatrix{
      \PPc(\emptyset)\ar[r]^{\PPc(i)} \ar[d]_{\alpha_\emptyset}
      & \PPc(X) \ar[d]_{\alpha_X}
      & \PPc(X) \ar[l]_{\PPc(r)} \ar[d]_{\alpha_X} \\
      \PP(\emptyset) \ar[r]_{\PP(i)}
      & \PP(X)
      & \PP(X) \ar[l]^{\PP(r)}
    }
  \end{equation*}
  Let $P = \alpha_\emptyset(\{\emptycomb\})$ and $Q = \alpha_X(\{\emptycomb, x, xy\})$.  By the naturality of $\alpha$ and the definition of $\PP(i)$, we have
  \begin{equation*}
    P = \PP(i)(P) = \PP(i)(\alpha_\emptyset(\{\emptycomb\})) = \alpha_X(\PPc(i)(\{\emptycomb\})) = \alpha_X(\P(X)).
  \end{equation*}
  Similarly we have
  \begin{equation*}
    \PP(r)(Q) = \alpha_X(\PPc(r)(\{\emptycomb, x, xy\})) = \alpha_X(\P(X)),
  \end{equation*}
  and thus $\PP(r)(Q) = P$.  Therefore, to prove $P = Q$, it suffices to show that $\PP(r)(Q) = Q$.  From $P = \alpha_\emptyset(\{\emptycomb\}) \in \PP(\emptyset)$ we obtain $P \subseteq \{\emptycomb\}$, and thus $\PP(r)(Q) \subseteq \{\emptycomb\}$.  This is possible only if $Q \subseteq \{\emptycomb\}$, and in such a case, it is clear from the definition of $\PP(r)$ that $\PP(r)(Q) = Q$.
\end{proof}

\section{Binary Decision Diagrams}
\label{sec:bdd}

\subsection{Definition of BDDs and ZDDs}

A BDD and a ZDD are both graph representations of a combination set (or a Boolean function, via the bijection in Proposition~\ref{prop:cs-to-bf}).
We first introduce a class of directed graphs, which we call diagrams, and can be regarded as both BDDs and ZDDs.  We define two interpretations of diagrams, one as BDDs and the other as ZDDs.

\begin{definition}
  A \emph{diagram} over a set $X$ is a rooted, directed acyclic graph
  with labeled nodes and edges satisfying the following.
  \begin{itemize}
  \item There are two types of nodes: \emph{decision nodes} and \emph{terminal
    nodes}.
  \item Each decision node is labeled by an element of $X$, and has two outgoing edges.  One of the edges is labeled by $0$ and another is labeled by $1$, and called a \emph{$0$-edge} and a \emph{$1$-edge}, respectively.
  \item Each terminal node is labeled by either $0$ or $1$, and has no outgoing edges.  A terminal node is called either a \emph{$0$-terminal node} or a \emph{$1$-terminal node} according to its label.
  \end{itemize}
  We write $\D(X)$ for the set of all diagrams over a set $X$.
\end{definition}

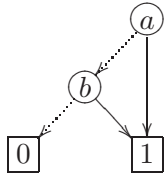
\begin{figure}[t]
  \centering
  \begin{equation*}
    \xymatrix@=1pc{
      & & *+[o][F]{\vphantom{b} a} \ar@{.>}[dl] \ar[dd] \\
      & *+[o][F]{b} \ar@{.>}[dl] \ar[dr] \\
      *+[F]{0} & & *+[F]{1}}
  \end{equation*}
  \caption{An example of a diagram.  Circles, squares, dotted arrows, and solid arrows denote decision nodes, terminal nodes, $0$-edges, and $1$-edges, respectively.}
  \label{fig:ex_dd}
\end{figure}

A diagram is often depicted as in Figure~\ref{fig:ex_dd}.
For convenience, we write $\zeroterm$ and $\oneterm$ for $0$- and $1$-terminal nodes, respectively, and $(a, F,
G)$ for a decision node which is labeled by $a$ and has $0$-edge and $1$-edge pointing to $F$ and $G$, respectively.  For example, the diagram in Figure~\ref{fig:ex_dd} is written as $(a, (b, \zeroterm, \oneterm), \oneterm)$.  Although this notation omits information about shared nodes (for example, from the notation above, we cannot know whether $1$-terminal nodes pointed to by $a$ and $b$ are shared or not), this does not affect the argument below.

We next define interpretation functions $\BDDI$ and $\ZDDI$ that transform a diagram into a combination set it represents as a BDD and a ZDD, respectively.

\begin{definition}
  \label{def:bdd-int}
  Given a set $X$, the interpretation $\beta_X\colon \D(X) \to \PP(X)$ of a diagram as a BDD is defined as follows:
  \begin{align*}
    \BDDI_X(\zeroterm) &= \emptyset, \qquad
                         \BDDI_X(\oneterm) = \P(X), \\
    \BDDI_X((a, F, G)) &= 
                         \{ C \in \BDDI(F) \mid a \notin C \} \cup
                         \{ C \in \BDDI(G) \mid a \in C \}.
  \end{align*}
\end{definition}

The characteristic function of $\BDDI_X(F)$ is a function that receives $C \subseteq X$ and outputs either true or false, and the output can be computed by the following procedure.  Starting from the
root node of $F$, in each step the procedure looks up the label $x$ of the
current node, and if $x \notin C$ it explores the $0$-edge, and the $1$-edge otherwise.  The procedure
terminates when it reaches a terminal node, and returns true if its label is $1$, and false otherwise.

\begin{remark}
  A BDD is often introduced as a representation of a Boolean
  function, rather than a combination set.  In this case, the interpretation of the diagram in
  Figure~\ref{fig:ex_dd} is represented by a formula $a \lor b$ (this corresponds to a combination set $\{a, b, ab\}$; here, we used the convention introduced after Definition~\ref{def:comb}).  In general, the above definition can be rephrased as follows: $0$- and $1$-terminal nodes represent $\false$
  and $\true$, respectively, and a decision
  node $(a, F, G)$ represents $(\neg a \land \beta_X(F)) \lor (a
  \land \beta_X(G))$.
\end{remark}

\begin{definition}
  \label{def:zdd-int}
  Given a set $X$, the interpretation $\ZDDI_X\colon \D(X) \to \PP(X)$ of a diagram as a ZDD is defined as follows:
  \begin{align*}
    \ZDDI_X(\zeroterm) &= \emptyset, \qquad
                       \ZDDI_X(\oneterm) = \{\emptycomb\}, \\
    \ZDDI_X((a, F, G)) &= \ZDDI_X(F) \cup
                       \{ A \cup \{a\} \mid A \in \ZDDI_X(G) \}.
  \end{align*}
\end{definition}
This semantics is intuitively understood as follows.  When a diagram is seen as a ZDD, each path from the root to $\oneterm$ designates a single combination.  Such a path is called a \emph{$1$-path}.
On each decision node $(a, F, G)$, traversing its $1$-edge means ``include $a$,'' and $0$-edge means ``do not include $a$.''  Any element not mentioned in $p$ is not included in the combination.  Therefore, the combination a $1$-path $p$ represents is
\begin{equation*}
  \{a \in X \mid \text{$p$ contains a $1$-edge of a decision node labeled by $a$} \}.
\end{equation*}
Finally, $\ZDDI_X(F)$ is defined to be the set of all combinations that can be written in this form for some $1$-path $p$.  In particular, $\oneterm$ should be interpreted by the combination $\{\emptycomb\}$, because $\oneterm$ has only one, obvious $1$-path corresponding to $\emptycomb$.  This justifies the formal definition above.
It is straightforward to check that the interpretation of a diagram in Figure~\ref{fig:ex_dd} as a ZDD is $\{a, b\}$.  Unlike the case of BDD, the combination $ab$ is not included.

In the rest of the paper, when no confusion arises, we omit subscripts of $\BDDI$ and $\ZDDI$.

\subsection{Naturality of the Semantics}

What distinguishes $\BDDI$ and $\ZDDI$ is the treatment of elements that do not occur in a diagram.  This difference can be formally captured by Theorem~\ref{thm:dd-nat}, but we first give a more intuitive explanation.  This was essentially mentioned in the literature~\cite[\S~2--3]{Minato:1993:ZBS:157485.164890}, but let us summarize the main points in our terminology.

Let $F \in \D(X)$ and assume that there is an element $x \in X$ that do not occur in $F$.  When regarded as a BDD, this means that it does not matter whether a combination contains the element $x$ or not; more formally, $C \in \BDDI(F)$ if and only if $C \cup \{x\} \in \BDDI(F)$.  In contrast, when $F$ is regarded as a ZDD, $x$ not occurring in $F$ never appear in combinations in $\ZDDI(F)$, and therefore $C \in \ZDDI(F)$ only if $x \notin C$.
For example, let $F$ be the diagram in Figure~\ref{fig:ex_dd}.
Then $F$ can be regarded as a diagram of $Y = \{a, b, c\}$, rather than $\{a, b\}$, and in this case we have
\begin{equation*}
  \BDDI_Y(F) = \{a, b, ab, ac, bc, abc\}, \text{ and }
  \ZDDI_Y(F) = \{a, b\}.
\end{equation*}
We can easily check that it is indeed the case that $C \in \BDDI_Y(F)$ if and only if $C \cup \{c\} \in \BDDI_Y(F)$, and $\ZDDI_Y(F)$ does not have any combination containing $c$.

This behavior is nicely explained by the difference between the functor structures of $\PPc$ and $\PP$, and this is a formalized version of the assertion that ``a BDD represents a Boolean function, and a ZDD represents a combination set.''

Before stating the theorem, let us defined a functor structure of $\D$.  Given a map $f\colon X \to Y$, the action of $\D$ on $f$ is given by relabelling:
\begin{equation*}
  \D(f)(\zeroterm) = \zeroterm, \quad \D(f)(\oneterm) = \oneterm, \quad
  \D(f)((a, F, G)) = (f(a), \D(f)(F), \D(f)(G)).
\end{equation*}

Then, we can state our first main theorem.
\begin{theorem}
  \label{thm:dd-nat}
  $\BDDI$ is a natural transformation from $\D$ to $\PPc$, and $\ZDDI$ is a natural transformation from $\D$ to $\PP$.
\end{theorem}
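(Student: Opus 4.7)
The plan is to prove both naturality statements simultaneously by structural induction on the diagram $F \in \D(X)$, for an arbitrary map $f\colon X \to Y$. The goal is to verify $\PPc(f) \circ \BDDI_X = \BDDI_Y \circ \D(f)$ and $\PP(f) \circ \ZDDI_X = \ZDDI_Y \circ \D(f)$ on every $F$. It is worth noting upfront that, unwinding definitions, $\PPc(f)(\mathcal{A}) = \{B \subseteq Y \mid f^{-1}(B) \in \mathcal{A}\}$ while $\PP(f)(\mathcal{A}) = \{f(A) \mid A \in \mathcal{A}\}$; the two transformations $\BDDI$ and $\ZDDI$ differ only in how they handle absent variables, which is precisely what these two functor actions distinguish.

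In the base cases, $F = \zeroterm$ gives $\BDDI_X(\zeroterm) = \emptyset$, whose image under $\PPc(f)$ is empty, matching $\BDDI_Y(\zeroterm)$; for $F = \oneterm$, $\PPc(f)(\P(X)) = \{B \subseteq Y \mid f^{-1}(B) \in \P(X)\} = \P(Y) = \BDDI_Y(\oneterm)$. The ZDD base cases are equally mechanical: $\PP(f)(\emptyset) = \emptyset$, and $\PP(f)(\{\emptycomb\}) = \{f(\emptyset)\} = \{\emptycomb\}$. For the inductive step with $F = (a, F_0, F_1)$, the key algebraic facts are the pointwise equivalence $a \in f^{-1}(B) \iff f(a) \in B$ for the BDD case, and the image-compatibility $f(A \cup \{a\}) = f(A) \cup \{f(a)\}$ for the ZDD case. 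Using these, one splits $\PPc(f)(\BDDI_X((a,F_0,F_1)))$ along the condition $f(a) \notin B$ vs.\ $f(a) \in B$, applies the induction hypothesis to $F_0$ and $F_1$, and reassembles the result as $\BDDI_Y((f(a), \D(f)(F_0), \D(f)(F_1))) = \BDDI_Y(\D(f)(F))$; the ZDD step follows the same pattern, using the image-compatibility to push $\{a\}$ through $\PP(f)$.

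I expect no serious obstacle; the only point requiring some care is distinguishing the two distinct interactions of $f$ with the label $a$ in the decision node, namely $f^{-1}$-pullback on the BDD side versus direct image on the ZDD side. This asymmetry is exactly the formal content of the theorem, and the induction is designed so that it is discharged by the two elementary identities noted above. A minor preliminary that is worth verifying once is that $\D$ really is a functor, i.e.\ $\D(g \circ f) = \D(g) \circ \D(f)$ and $\D(\mathrm{id}) = \mathrm{id}$, but this is immediate from the recursive definition of $\D(f)$ as relabelling of decision nodes.
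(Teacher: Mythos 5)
Your proof is correct: the structural induction on diagrams, with the base cases checked directly and the inductive step discharged via $a \in f^{-1}(B) \iff f(a) \in B$ (BDD side) and $f(A \cup \{a\}) = f(A) \cup \{f(a)\}$ (ZDD side), verifies both naturality squares exactly as intended. The paper omits the proof of Theorem~\ref{thm:dd-nat} as routine, and your argument is precisely the standard verification it has in mind (compare the remark after Theorem~\ref{thm:sdd-nat} on $\PPc(f)$ preserving Boolean operations and $\PP(f)$ preserving unions), so there is nothing to add.
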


\begin{remark}
  \label{rmk:dd-not-nat}
  $\BDDI$ is not a natural
  transformation from $\D$ to $\PP$, and $\ZDDI$ is not a natural
  transformation from $\D$ to $\PPc$.  For example, let $F$ be a diagram in Figure~\ref{fig:ex_dd}, and consider $X = \{a, b\}$, $Y = \{a, b, c\}$, and $i \colon X \hookrightarrow Y$.  Then we can easily check that $\BDDI_Y \circ \D(i) (F) \ne \PP(i) \circ \BDDI_X(F)$ and $\ZDDI_Y \circ \D(i) (F) \ne \PPc(i) \circ \ZDDI_X(F)$.
\end{remark}


\subsection{Ordered BDDs and ZDDs}
\label{subsec:odd}

We did not assume any structure on the set $X$, but in practice, a total order $\le$ on $X$ is often introduced, and the order of occurrence in a diagram is required to be the same as the order $\le$ on $X$ (the advantage of this restriction is that it allows an efficient implementation of operations on Boolean functions~\cite{Bryant1986}).

\begin{definition}
  Let $(X, \leq)$ be a totally ordered set.  $F \in \D(X)$ is said to \emph{respect} $\le$ if, for all $x, y \in X$, $x < y$ holds whenever $y$ occurs as a descendant of $x$ in $F$.
\end{definition}
For example, the diagram in Figure~\ref{fig:ex_dd} respects a total order $\le$ if and only if the order satisfies $a < b$.

Let $\D'(X, \leq)$ be the set of all diagrams over $X$ respecting $\leq$.
Then $\D'$ becomes a functor from the category of total order and
strictly monotone maps.  The functor structure of $\D'$ can be given by restricting that of $\D$: $\D'(f) = \D(f)$ for strictly monotone $f$.  To check that $\D'$ is well-defined, it suffices to see that, for any strictly monotone $f\colon (X, \le_X) \to (Y, \le_Y)$, a diagram respecting $\le_X$ is mapped by $\D(f)$ to a diagram respecting $\le_Y$.  Then $\BDDI$ and $\ZDDI$ restricts to natural transformations from $\D'$.  To be more precise, we can state these as follows.

\begin{lemma}
  \label{lem:odd-subfunctor}
  Let $\TO$ be the category of total order and strictly monotone maps.
  Then $\D'\colon \TO \to \Sets$ is a subfunctor of $\D \circ U$, where $U\colon \TO \to \Sets$ is the forgetful functor.
\end{lemma}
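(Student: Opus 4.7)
The plan is to verify the two conditions that define a subfunctor: an objectwise inclusion of sets, and compatibility with morphisms. The first condition, that $\D'(X, \le) \subseteq \D(X) = (\D \circ U)(X, \le)$, is immediate from the definition of $\D'$.

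For the second, I need to show that for every strictly monotone $f\colon (X, \le_X) \to (Y, \le_Y)$ and every $F \in \D'(X, \le_X)$, the image $\D(f)(F)$ lies in $\D'(Y, \le_Y)$. The key observation is that $\D(f)$ relabels nodes but preserves the underlying graph structure: by the recursive clauses $\D(f)(\zeroterm) = \zeroterm$, $\D(f)(\oneterm) = \oneterm$, and $\D(f)((a, F_0, F_1)) = (f(a), \D(f)(F_0), \D(f)(F_1))$, the nodes of $\D(f)(F)$ and the descendant relation among them match those of $F$, with each label $a$ replaced by $f(a)$. Consequently, if a node labeled $y$ appears as a descendant of one labeled $y'$ in $\D(f)(F)$, then there exist $x, x' \in X$ with $f(x) = y$, $f(x') = y'$, and $x$ a descendant of $x'$ in $F$. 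Since $F$ respects $\le_X$, we have $x' <_X x$; strict monotonicity of $f$ then yields $y' = f(x') <_Y f(x) = y$, so $\D(f)(F)$ respects $\le_Y$, as required.

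The argument is essentially bookkeeping and can be carried out either by the structural observation above or by a direct induction on $F$. The one place where the definition of $\TO$ really matters is in the final inequality: \emph{strict} monotonicity is precisely what guarantees that strict order among ancestor/descendant labels survives the relabeling. If merely monotone maps were allowed, two labels in an ancestor/descendant relation could be identified and the image diagram would fail to respect $\le_Y$; thus the choice of morphisms in $\TO$ is essential, not cosmetic. Once (i) and (ii) are both in place, the functoriality of $\D'$ (preservation of identities and composition) is inherited from $\D \circ U$, because $\D'(f)$ is defined as the literal restriction of $\D(U(f))$. I do not anticipate a substantial obstacle; the main thing to articulate clearly is this role of strict monotonicity.
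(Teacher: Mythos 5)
Your proof is correct and matches the paper's (sketched) argument: the paper likewise reduces the lemma to checking that a strictly monotone $f$ sends diagrams respecting $\le_X$ to diagrams respecting $\le_Y$, with strict monotonicity playing exactly the role you identify. Nothing further is needed.
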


\begin{theorem}
  \label{thm:odd-nat}
  Let us define $\BDDI'_{(X, \leq)} = \BDDI_X$ and $\ZDDI'_{(X, \leq)} = \ZDDI_X$.  Then $\BDDI'\colon \D' \to \PPc \circ U$ and $\ZDDI'\colon \D' \to \PP \circ U$ are natural.
\end{theorem}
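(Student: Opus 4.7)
The plan is to obtain this as a direct corollary of Theorem~\ref{thm:dd-nat} and Lemma~\ref{lem:odd-subfunctor}, by restricting the naturality of the unordered case along the subfunctor inclusion $\D' \hookrightarrow \D \circ U$. First, since Lemma~\ref{lem:odd-subfunctor} yields $\D'(X, \leq) \subseteq \D(X) = (\D \circ U)(X, \leq)$, the assignments $\BDDI'_{(X, \leq)} = \BDDI_X$ and $\ZDDI'_{(X, \leq)} = \ZDDI_X$ are well-defined as maps $\D'(X, \leq) \to (\PPc \circ U)(X, \leq)$ and $\D'(X, \leq) \to (\PP \circ U)(X, \leq)$, respectively.

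For naturality, I would fix a strictly monotone map $f\colon (X, \leq_X) \to (Y, \leq_Y)$ in $\TO$ and verify the commutativity of the square whose top row is $\D'(f)$, left and right edges are $\BDDI'_{(X, \leq_X)}$ and $\BDDI'_{(Y, \leq_Y)}$, and bottom row is $\PPc(U(f))$; and analogously for $\ZDDI'$. By Lemma~\ref{lem:odd-subfunctor}, $\D'(f)$ is precisely the restriction of $\D(U(f))\colon \D(X) \to \D(Y)$ to $\D'(X, \leq_X)$, with codomain landing inside $\D'(Y, \leq_Y)$. Hence each side of the required equation agrees, when evaluated on any $F \in \D'(X, \leq_X)$, with the corresponding side of the naturality square for $\BDDI$ at the underlying map $U(f)\colon X \to Y$. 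That outer square commutes by Theorem~\ref{thm:dd-nat}, so the restricted square commutes as well. The argument for $\ZDDI'$ is identical, using naturality of $\ZDDI$ in place of $\BDDI$.

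The proof introduces no new combinatorial content beyond Theorem~\ref{thm:dd-nat}; it is essentially the bookkeeping observation that a natural transformation between functors $\Sets \to \Sets$ restricts, along any subfunctor inclusion and precomposition with $U$, to a natural transformation between the resulting functors $\TO \to \Sets$. The only delicate point is ensuring that the codomain of $\D'(f)$ really lies in $\D'(Y, \leq_Y)$ rather than merely in $\D(Y)$, but this is exactly what Lemma~\ref{lem:odd-subfunctor} guarantees. I therefore expect no substantive obstacle: once the subfunctor structure on $\D'$ has been set up, the theorem reduces to a diagram chase already performed in the unordered setting.
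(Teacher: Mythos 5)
Your proposal is correct and matches the paper's own argument: the paper likewise writes $\BDDI' = \BDDI U \circ i$ for the subfunctor inclusion $i\colon \D' \to \D \circ U$ (Lemma~\ref{lem:odd-subfunctor}) and deduces naturality directly from Theorem~\ref{thm:dd-nat}, exactly the restriction-plus-whiskering bookkeeping you describe.
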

\begin{proof}
  Let $i\colon \D' \to \D \circ U$ be the inclusion.  Then we can write $\BDDI' = \BDDI U \circ i$ (where $\BDDI U$ denotes the whiskering by $U$).  Then, the naturality of $\BDDI'$ is an immediate consequence of that of $\BDDI\colon \D \to \PPc$, which follows from Theorem~\ref{thm:dd-nat}.  The same argument works for $\ZDDI' = \ZDDI U \circ i$, too.
\end{proof}

This proof is a special case of the following observation, which is used again in the next section.  Theorem~\ref{thm:odd-nat} is a direct consequence of Lemma~\ref{lem:odd-subfunctor} and this general fact (with $Q = \PPc$, $D = \D$, $\alpha = \BDDI$, $\CatC = \TO$, and $D' = \D'$ in the case of BDD, and $Q = \PP$ and $\alpha = \ZDDI$ for ZDD).
\begin{remark}
  \label{rmk:general-nonsense}
  Consider the following data.
  \begin{itemize}
  \item a functor $Q\colon \Sets \to \Sets$,
  \item a functor (of diagrams without restriction) $D \colon \Sets \to \Sets$,
  \item a natural transformation (giving an interpretation of diagrams) $\alpha \colon D \to Q$,
  \item a category (of some structures) $\CatC$,
  \item a (forgetful) functor $U \colon \CatC \to \Sets$, and
  \item a functor (of diagrams respecting the structure of $\CatC$-objects) $D'\colon \CatC \to \Sets$ such that $D' \subseteq D \circ U$.
  \end{itemize}
  Then the interpretation $\alpha$ induces a natural transformation $\alpha U \circ i \colon D' \to Q \circ U$, where $i$ is the inclusion from $D'$ into $D \circ U$.
\end{remark}

\section{Sentential Decision Diagrams}
\label{sec:sdd}

In this section, we show that the results obtained in the previous section can be extended to \emph{Sentential Decision Diagrams (SDDs)}~\cite{Darwiche:2011:SNC:2283516.2283536} and \emph{Zero-suppressed SDDs (ZSDDs)}~\cite{Nishino:2016:ZSD:3015812.3015969}, which generalize BDDs and ZDDs, respectively.

\subsection{Sentential Decision Diagrams without Constraints}

We first consider SDDs and ZSDDs that are not constrained in the sense that the occurrences of variables are not restricted in any way.
An SDD is a representation of a decomposition of Boolean function~\cite{PipatsrisawatDarwiche10} (satisfying a certain condition), so we begin with its definition.
\begin{definition}
  Let $X$ and $Y$ be disjoint sets of variables, and $f$ a Boolean function over $X \cup Y$.  An \emph{$(X, Y)$-decomposition} of $f$ is a representation of $f$ by a Boolean formula of the form $f(X, Y) = \bigvee_{i=1}^n (p_i(X) \land s_i(Y))$, where $p_i$ and $s_i$ depend only on variables in $X$ and $Y$, respectively.  Each $p_i$ is called a \emph{prime}, and each $s_i$ is called a \emph{sub} of the decomposition.
\end{definition}
In the original definition of SDDs, following the definition of a decomposition above, it is required that $p_i$ and $s_i$ have no common variable.  This restriction is omitted in our first definition of SDDs and ZSDDs, and considered later in Section~\ref{subsec:sdd-vtree}.

There is a well-known special case of decomposition, called Shannon decomposition, where $X$ consists of a single variable, $n=2$, $p_1 = X$, and $p_2 = \neg X$.  This decomposition corresponds to a BDD, and in this sense, SDD is a generalization of BDD (and similarly for ZDD and ZSDD).  The name \emph{sentential} decision diagram comes from the fact that an SDD allows primes to be a general formula (sentence), rather than only literals~\cite{Darwiche:2011:SNC:2283516.2283536}.


We first define SDDs and their interpretation as combination sets.
\begin{definition}
  \label{def:sdd}
  Let $X$ be a set.  We define SDDs over $X$ and the interpretation $\SDDI_X$ inductively as follows.
  \begin{itemize}
  \item $\top$ and $\bot$ are SDDs, and interpreted by $\P(X)$ and $\emptyset$, respectively.
  \item $x$ and $\neg x$ are SDDs for each $x \in X$, and interpreted by $\{C \subseteq X \mid x \in C\}$ and $\{C \subseteq X \mid x \notin C\}$, respectively.
  \item If $p_i, s_i$ are SDDs for $1 \le i \le n$, then $\{(p_1, s_1),
    \dots, (p_n, s_n)\}$ is an SDD, and interpreted by $\bigcup_{i=1}^n
    (\SDDI_X(p_i) \cap \SDDI_X(s_i))$.
  \end{itemize}
  An SDD of either the first or the second form is called a \emph{terminal}, and the third a \emph{decomposition}.  We regard a decomposition as a set of pairs, and thus the order of pairs is irrelevant.
\end{definition}

A zero-suppressed variant of SDDs, called ZSDDs, are defined as follows~\cite{Nishino:2016:ZSD:3015812.3015969}.
\begin{definition}
  Let $X$ be a set.  We define ZSDDs over $X$ and the interpretation $\ZSDDI_X$ inductively as follows.
  \begin{itemize}
  \item $\bot$ and $\emptycomb$ are ZSDDs, and interpreted by $\emptyset$ and $\{\emptycomb\}$, respectively.
  \item $x$ and $\pm x$ are ZSDDs for each $x \in X$, and interpreted by
    $\{\{x\}\}$ and $\{\emptycomb, \{x\}\}$, respectively.
  \item If $p_i, s_i$ are ZSDDs for $1 \le i \le n$, then $\{(p_1, s_1),
    \dots, (p_n, s_n)\}$ is a ZSDD, and interpreted by $\bigcup_{i=1}^n
    (\ZSDDI_X(p_i) \sqcup \ZSDDI_X(s_i))$, where $\sqcup$ is defined by
    \begin{equation*}
      P \sqcup Q = \{A \cup B \mid A \in P, B \in Q\}.
    \end{equation*}
  \end{itemize}
\end{definition}

We can prove an analogue of Theorem~\ref{thm:dd-nat}.
Let $\SDD(X)$ and $\ZSDD(X)$ be the set of SDDs and ZSDDs, respectively, over $X$.  Given a map $f\colon X \to Y$, we define $\SDD(f)\colon \SDD(X) \to \SDD(Y)$ by relabeling:
\begin{gather*}
  \SDD(f)(\top) = \top, \quad \SDD(f)(\bot) = \bot, \quad
  \SDD(f)(x) = f(x), \quad \SDD(f)(\neg x) = \neg f(x), \\
  \SDD(\{(p_i, s_i)\}_{i=1}^n) = \{(\SDD(f)(p_i), \SDD(f)(s_i))\}_{i=1}^n.
\end{gather*}
$\ZSDD(f)$ is defined similarly.  Then $\SDD$ and $\ZSDD$ are functors from $\Sets$ to $\Sets$, and we can prove the following.
\begin{theorem}
  \label{thm:sdd-nat}
  $\SDDI$ is a natural transformation from $\SDD$ to $\PPc$, and $\ZSDDI$ is a natural transformation from $\ZSDD$ to $\PP$.
\end{theorem}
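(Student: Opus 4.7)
The plan is to prove both naturality statements by structural induction on the SDD (resp.\ ZSDD). For each fixed $f\colon X \to Y$, the terminal cases are routine unfoldings of definitions, and the decomposition clause reduces, via the inductive hypothesis, to showing that the target functor ($\PPc$ or $\PP$) commutes with the set-theoretic operations used to combine primes with subs. The main obstacle is notational rather than conceptual: one must keep the two functors $\PPc$ (two nested contravariant steps) and $\PP$ (two nested covariant steps) clearly separated, since the two theorems share the same inductive skeleton but pair different semantic operations with different functorial machinery.

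For $\SDDI$, I verify $\PPc(f) \circ \SDDI_X = \SDDI_Y \circ \SDD(f)$ on each constructor of Definition~\ref{def:sdd}. The terminal cases $\top$ and $\bot$ are immediate; the literal cases $x$ and $\neg x$ reduce to the preimage identity $x \in f^{-1}(B) \iff f(x) \in B$. For a decomposition $\{(p_i, s_i)\}_{i=1}^n$, applying the inductive hypothesis replaces each $\SDDI_Y(\SDD(f)(p_i))$ by $\PPc(f)(\SDDI_X(p_i))$, and symmetrically for $s_i$; it then remains to observe that $\PPc(f)$, which sends $A$ to $\{B \subseteq Y \mid f^{-1}(B) \in A\}$, commutes with finite intersections and finite unions. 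This is immediate since $f^{-1}$ preserves all Boolean operations.

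For $\ZSDDI$, the same skeleton applies with $\PP(f) = \P(\P(f))$ in place of $\PPc(f)$. The four terminal cases $\bot$, $\emptycomb$, $x$, $\pm x$ are checked directly against $\PP(f)(P) = \{f(A) \mid A \in P\}$. The decomposition case requires that $\PP(f)$ distribute over $\cup$ (trivial, since direct image always preserves unions) and over $\sqcup$; the latter follows from the elementwise identity $f(A \cup B) = f(A) \cup f(B)$, which gives $\PP(f)(P \sqcup Q) = \PP(f)(P) \sqcup \PP(f)(Q)$. Combining these with the inductive hypothesis on each $p_i$ and $s_i$ closes the induction.
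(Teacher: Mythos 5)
Your proof is correct and follows essentially the same route as the paper, which proves the theorem by exactly the two facts you isolate: $\PPc(f)$ preserves the Boolean operations (here, the finite unions and intersections appearing in the decomposition clause) and $\PP(f)$ preserves $\bigcup$ and $\sqcup$, combined with a routine structural induction over the terminal and decomposition cases. Your write-up simply supplies the inductive details that the paper leaves implicit.
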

We can prove this using the fact that $\PPc(f)$ preserves all Boolean operations, and $\PP(f)$ preserves both $\bigcup$ and $\sqcup$.  In addition, an analogue of Remark~\ref{rmk:dd-not-nat} holds for SDDs and ZSDDs.

\subsection{SDDs and ZSDDs Respecting Vtrees}
\label{subsec:sdd-vtree}

In this section, we extend the result of Section~\ref{subsec:odd} to SDDs and ZSDDs.  A total order is replaced by a vtree defined below.
\begin{definition}
  \label{def:vtree}
  A \emph{vtree} for a set $X$ is a rooted, full binary tree whose leaves are in one-to-one correspondence with elements of $X$.
\end{definition}
Below, a leaf that corresponds to $x \in X$ is denoted simply by $x$, and a vtree with left and right children $v$ and $w$ is denoted by $(v, w)$.  We write $|v|$ for $X$ if $v$ is a vtree for $X$.

\begin{definition}
  We define an SDD or a ZSDD \emph{respecting} $v$ inductively as follows.
  \begin{enumerate}
  \item $\top$, $\bot$, $\emptycomb$ respect any vtree.
    \label{item:const_term}
  \item $x$, $\neg x$, $\pm x$ respect a leaf corresponding to $x$.
    \label{item:var_term}
  \item If $\alpha$ respects either $v$ or $w$, then $\alpha$ respects $(v, w)$.
    \label{item:congruence}
  \item If $p_i$ respects $v$ and $s_i$ respects $w$ for each $1 \le i \le n$, then $\{(p_1, s_1), \dots, (p_n, s_n)\}$ respects $(v, w)$.
    \label{item:decomposition}
  \end{enumerate}
\end{definition}
This definition is equivalent to that by Bova~\cite{Bova:2016:SEM:3015812.3015951} (except that we do not require primes to form a partition), and relaxes the ones by
Darwiche~\cite[Def.\ 5]{Darwiche:2011:SNC:2283516.2283536} and by Nishino et al.~\cite[Def.\ 1]{Nishino:2016:ZSD:3015812.3015969} (they do not allow rule~\ref{item:congruence}).

We next define a category having vtrees as objects, and embeddings defined below as morphisms.
\begin{definition}
  Let $v$ and $w$ be vtrees.  We define \emph{embeddings (of vtrees)} from $v$ to $w$ inductively.  Let $f\colon |v| \to |w|$ be a map.
  \begin{itemize}
  \item $f$ is an embedding if $v$ is a leaf.
  \item $f$ is an embedding if $w = (w_1, w_2)$, and $f$ is an embedding from $v$ to either $w_1$ or $w_2$.
  \item $f$ is an embedding if $v = (v_1, v_2)$, $w = (w_1, w_2)$, and the restriction of $f$ to $|v_i|$ is an embedding from $v_i$ to $w_i$ for $i=1,2$.
  \end{itemize}
\end{definition}
It is not difficult to check that embeddings are closed under composition.  We write $\VTree$ for the category of vtrees and embeddings.

\begin{remark}
  A vtree generalizes a finite total order: $x_1 < x_2 < \dots < x_n$ corresponds to a vtree $(x_1, (x_2, \dots (x_{n-1}, x_n)\dots))$.  If $X$ and $Y$ are finite totally ordered sets, and $v_X$ and $v_Y$ are the corresponding vtrees, then $f\colon X \to Y$ is strictly monotone if and only if it is an embedding from $v_X$ to $v_Y$.  Moreover, every BDD respecting the order of $X$ can be translated to an SDD respecting $v_X$~\cite{Darwiche:2011:SNC:2283516.2283536}, and similarly for ZDDs and ZSDDs~\cite{Nishino:2016:ZSD:3015812.3015969}.
\end{remark}

Let $\SDD'(v)$ be the set of SDDs respecting $v$.  Then, for an embedding $f\colon v \to w$, it is straightforward to check that $\SDD(f)$ restricts to a map $\SDD'(f)\colon \SDD'(v) \to \SDD'(w)$, that is, $\SDD'$ is a functor from $\VTree$ to $\Sets$.  In the same manner, we can define a functor $\ZSDD'$ such that $\ZSDD'(v)$ is the set of ZSDDs respecting $v$.  The following is an analogue of Lemma~\ref{lem:odd-subfunctor}.

\begin{lemma}
  \label{lem:vt-subf}
  Let $U \colon \VTree \to \Sets$ be the forgetful functor, which maps $v$ to $|v|$.  Then, $\SDD'$ and $\ZSDD'$ are subfunctors of $\SDD \circ U$ and $\ZSDD \circ U$, respectively.
\end{lemma}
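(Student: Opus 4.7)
The plan is to reduce the subfunctor claim to a single inductive statement about individual SDDs (resp.\ ZSDDs) and embeddings. Two things need checking: that $\SDD'(v) \subseteq \SDD(U(v))$, which is immediate because every SDD respecting $v$ is built using rules that yield SDDs over $|v|$; and that for each embedding $f \colon v \to w$ the map $\SDD(f)$ sends $\SDD'(v)$ into $\SDD'(w)$. Once the latter holds, functoriality of $\SDD'$ descends from that of $\SDD \circ U$ automatically, and the same reduction applies to $\ZSDD'$. So the lemma boils down to proving: if $\alpha$ respects $v$ and $f \colon v \to w$ is an embedding, then $\SDD(f)(\alpha)$ respects $w$ (and analogously for ZSDDs).

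I would prove this claim by induction on $|v| + |w|$ (say, the total number of nodes), after first establishing, by a short induction on the respecting derivation, the auxiliary fact that if $\alpha$ respects $v$ then every variable occurring in $\alpha$ lies in $|v|$. The point of the auxiliary fact is that whenever $\alpha$ actually lives in a sub-vtree $v'$ of $v$, I can harmlessly replace $\SDD(f)$ by $\SDD$ applied to the restriction of $f$ to $|v'|$, which is what makes the induction descend cleanly.

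For the base case $v$ is a leaf: constants are preserved by $\SDD(f)$ and respect any vtree by rule~\ref{item:const_term}, while a variable terminal on $x$ becomes a terminal on $f(x) \in |w|$, which respects the leaf $f(x)$ by rule~\ref{item:var_term} and is then lifted to all of $w$ by iterated use of rule~\ref{item:congruence}. In the inductive step, $v = (v_1, v_2)$, which by the definition of embedding forces $w = (w_1, w_2)$, and $f$ either factors through some child $w_i$ or aligns children with children. In the factoring sub-case, the IH applied to $v$ and $w_i$ (whose sum is strictly smaller) yields $\SDD(f)(\alpha)$ respecting $w_i$, and rule~\ref{item:congruence} lifts this to $w$. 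In the aligning sub-case I split on why $\alpha$ respects $v$: if by rule~\ref{item:congruence} from some $v_j$, the auxiliary lemma lets me write $\SDD(f)(\alpha) = \SDD(f|_{|v_j|})(\alpha)$ and invoke the IH on $(v_j, w_j)$, then lift; if by rule~\ref{item:decomposition}, I apply the IH to each prime with $f|_{|v_1|} \colon v_1 \to w_1$ and each sub with $f|_{|v_2|} \colon v_2 \to w_2$, and reassemble via rule~\ref{item:decomposition}. The ZSDD case is identical modulo renaming the terminal alphabet and the corresponding change to the semantics.

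The step I expect to be delicate is the interplay between rule~\ref{item:congruence} for respecting and the superficially similar rule for embeddings that factors through a child: both shrink the ambient vtree, but in different ways, and one must be careful that the $\SDD(f)(\alpha)$ one writes down really lands where claimed. Choosing $|v| + |w|$ as the induction measure, rather than inducting on $\alpha$ alone, is the design choice that keeps every branch of the case split strictly decreasing and lets the auxiliary variable-containment lemma do its job.
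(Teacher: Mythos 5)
Your proof is correct and is essentially the verification the paper leaves implicit (it only remarks that the restriction of $\SDD(f)$ is ``straightforward to check''): the induction on the combined size of $v$ and $w$, together with the auxiliary fact that an SDD respecting $v$ only mentions variables in $|v|$ (which justifies replacing $\SDD(f)$ by $\SDD(f|_{|v_j|})$), handles every case, and the factoring/aligning split matches the embedding rules exactly. The only cosmetic omission is that in the inductive step $\alpha$ may also respect $v$ outright by rule~\ref{item:const_term} (a constant), a case that is immediate since $\SDD(f)$ and $\ZSDD(f)$ fix constants and constants respect every vtree.
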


Then, similarly to Theorem~\ref{thm:odd-nat}, the following is an immediate corollary of Remark~\ref{rmk:general-nonsense}.

\begin{corollary}
  \label{cor:vt-nat}
  $\SDDI$ and $\ZSDDI$ restricts to natural transformations from $\SDD'$ and $\ZSDD'$, respectively.  Concretely, the restrictions are given by $\SDDI'_v = \SDDI_{|v|}$ and $\ZSDDI'_v = \ZSDDI_{|v|}$ for a vtree $v$, and are natural transformations from $\SDD'$ to $\PPc \circ U$ and from $\ZSDD'$ to $\PP \circ U$, respectively.
\end{corollary}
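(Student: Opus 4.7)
The plan is to apply Remark~\ref{rmk:general-nonsense} twice, once for SDDs and once for ZSDDs. For the SDD case I would instantiate the remark with $Q = \PPc$, $D = \SDD$, $\alpha = \SDDI$, $\CatC = \VTree$, $U$ the forgetful functor from $\VTree$ to $\Sets$, and $D' = \SDD'$; for the ZSDD case I would replace $Q$ by $\PP$, $D$ by $\ZSDD$, and $\alpha$ by $\ZSDDI$, leaving the rest the same.

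All the hypotheses demanded by Remark~\ref{rmk:general-nonsense} are already in hand. Naturality of $\SDDI\colon \SDD \to \PPc$ and of $\ZSDDI\colon \ZSDD \to \PP$ is the content of Theorem~\ref{thm:sdd-nat}, and the subfunctor conditions $\SDD' \subseteq \SDD \circ U$ and $\ZSDD' \subseteq \ZSDD \circ U$ are supplied by Lemma~\ref{lem:vt-subf}. Applying the remark therefore produces natural transformations $\SDDI U \circ i \colon \SDD' \to \PPc \circ U$ and $\ZSDDI U \circ i \colon \ZSDD' \to \PP \circ U$, where $i$ denotes the appropriate subfunctor inclusion in each case.

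The last step is to verify that these natural transformations match the formulas stated in the corollary. Unfolding the whiskering and composition at a vtree $v$, each transformation at $v$ is obtained by first including $\SDD'(v)$ into $\SDD(|v|)$ (respectively $\ZSDD'(v)$ into $\ZSDD(|v|)$) and then applying $\SDDI_{|v|}$ (respectively $\ZSDDI_{|v|}$), which is exactly the prescription $\SDDI'_v = \SDDI_{|v|}$ and $\ZSDDI'_v = \ZSDDI_{|v|}$. I do not expect any substantive obstacle: all the combinatorial content already lives in Theorem~\ref{thm:sdd-nat} and Lemma~\ref{lem:vt-subf}, and the corollary merely repackages them through the abstract mechanism spelled out in Remark~\ref{rmk:general-nonsense}.
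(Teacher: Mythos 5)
Your proposal is correct and follows exactly the paper's route: the corollary is obtained by instantiating Remark~\ref{rmk:general-nonsense} with the naturality of $\SDDI$ and $\ZSDDI$ from Theorem~\ref{thm:sdd-nat} and the subfunctor inclusions from Lemma~\ref{lem:vt-subf}. The explicit unfolding of the whiskering to confirm the componentwise formulas $\SDDI'_v = \SDDI_{|v|}$ and $\ZSDDI'_v = \ZSDDI_{|v|}$ is a fine addition but matches what the paper leaves implicit.
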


\subsection{Strong Determinism and Partition}
\label{subsec:sdd-partition}

The original definitions of SDDs and ZSDDs require primes to form a partition~\cite{Darwiche:2011:SNC:2283516.2283536,Nishino:2016:ZSD:3015812.3015969}.  Such a restriction can be treated in a similar manner to the previous section.

\begin{definition}
  Let $\alpha \in \SDD(X)$ be an SDD\@.
  \begin{enumerate}
  \item $\alpha$ is \emph{strongly deterministic} if every decomposition of $\alpha$ has pairwise disjoint primes.  More concretely, every decomposition $\{(p_i, s_i)\}_{i=1}^n$ in $\alpha$ satisfies $\SDDI(p_i) \cap \SDDI(p_j) = \emptyset$ whenever $i \ne j$.  (A decomposition of a Boolean function having this property is said to be strongly deterministic~\cite{PipatsrisawatDarwiche10}.)
  \item $\alpha$ is a \emph{partition SDD} if it is strongly deterministic and every decomposition $\{(p_i, s_i)\}_{i=1}^n$ in $\alpha$ satisfies $\SDDI(p_1) \cup \dots \cup \SDDI(p_n) = \P(X)$.
  \end{enumerate}
\end{definition}

\begin{definition}
  A ZSDD $\alpha \in \ZSDD(X)$ is \emph{strongly deterministic} if every decomposition of $\alpha$ has pairwise disjoint primes.
\end{definition}

Similarly to Sections~\ref{subsec:odd} and \ref{subsec:sdd-vtree}, we have the following.
\begin{lemma}
  Let $f\colon X \to Y$ be a map.
  \begin{enumerate}
  \item $\SDD(f)$ preserves strongly deterministic SDDs and partition SDDs.
  \item If $f$ is injective, then $\ZSDD(f)$ preserves strongly deterministic ZSDDs.
  \end{enumerate}
\end{lemma}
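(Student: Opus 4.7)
The plan is structural induction on the (Z)SDD $\alpha$, leveraging the naturality of $\SDDI$ and $\ZSDDI$ established in Theorem~\ref{thm:sdd-nat}. Terminal cases ($\top$, $\bot$, $\emptycomb$, $x$, $\neg x$, $\pm x$) are vacuous, since they contain no decompositions. For a decomposition $\alpha = \{(p_i, s_i)\}_{i=1}^n$, the inner decompositions of $\SDD(f)(\alpha)$ and $\ZSDD(f)(\alpha)$ sit inside the relabelled $\SDD(f)(p_i)$, $\SDD(f)(s_i)$ (and analogously for ZSDDs), so the induction hypothesis covers them; only the outermost decomposition requires direct checking.

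For part~1, I would use that $\PPc(f)\colon \PPc(X) \to \PPc(Y)$, given explicitly by $S \mapsto \{B \subseteq Y \mid f^{-1}(B) \in S\}$, is a Boolean algebra homomorphism preserving $\cap$, $\cup$, and the top element $\P(-)$. By the naturality of $\SDDI$, $\SDDI(\SDD(f)(p_i)) = \PPc(f)(\SDDI(p_i))$. Hence $\SDDI(p_i) \cap \SDDI(p_j) = \emptyset$ yields $\SDDI(\SDD(f)(p_i)) \cap \SDDI(\SDD(f)(p_j)) = \PPc(f)(\emptyset) = \emptyset$, and $\bigcup_i \SDDI(p_i) = \P(X)$ yields $\bigcup_i \SDDI(\SDD(f)(p_i)) = \PPc(f)(\P(X)) = \P(Y)$, so both strong determinism and the partition property transfer to $\SDD(f)(\alpha)$.

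For part~2, injectivity of $f$ is used precisely to make $\P(f)\colon \P(X) \to \P(Y)$ injective, which in turn makes $\PP(f)$ preserve binary intersections (the direct image along an injection preserves intersections). Combined with the naturality identity $\ZSDDI(\ZSDD(f)(p_i)) = \PP(f)(\ZSDDI(p_i))$, the argument for disjointness of primes then runs exactly as in part~1.

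The main conceptual point, which I would flag, is the asymmetry between the two parts: $\PPc$ behaves as a Boolean algebra homomorphism unconditionally, whereas $\PP$ generically fails to preserve intersections, forcing the injectivity hypothesis on $\ZSDD(f)$. A minimal witness for the necessity of injectivity is the collapse $f\colon \{a,b\} \to \{c\}$, under which the disjoint combination sets $\{\{a\}\}$ and $\{\{b\}\}$ both map to $\{\{c\}\}$. I do not expect any genuine obstacle beyond bookkeeping; the only subtle verification is that every subdecomposition appearing inside $\SDD(f)(\alpha)$ really is the relabelled image of a subdecomposition of $\alpha$, which is immediate from the recursive definition of $\SDD(f)$ and $\ZSDD(f)$.
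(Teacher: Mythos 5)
Your proposal is correct and follows essentially the same route as the paper: structural induction with the decomposition case handled via the naturality of $\SDDI$ and $\ZSDDI$ (Theorem~\ref{thm:sdd-nat}), using that $\PPc(f)$ preserves intersections, unions and the top element, and that $\PP(f)$ preserves intersections when $f$ is injective. The extra details you supply (the explicit formula for $\PPc(f)$ and the counterexample showing injectivity is needed) are consistent with, and slightly more explicit than, the paper's argument.
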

\begin{proof}
  By induction.
  \begin{enumerate}
  \item Let $\alpha = \{(p_i, s_i)\}_{i=1}^n \in \SDD(X)$.  We first show that if $\{p_i\}_i$ are pairwise disjoint, then so are $\{\SDD(f)(p_i)\}_i$.  Let us assume $\SDDI(p_i) \cap \SDDI(p_j) = \emptyset$.  Then by naturality of $\SDDI$ we have
    \begin{align*}
      \SDDI(\SDD(f)(p_i)) \cap \SDDI(\SDD(f)(p_j))
      & = \PPc(f)(\SDDI(p_i)) \cap \PPc(f)(\SDDI(p_j)) \\
      & = \PPc(f)(\SDDI(p_i) \cap \SDDI(p_j)) \\
      & = \emptyset.
    \end{align*}
    Similarly we can easily check that $\bigcup_i \SDDI(p_i) = \P(X)$ implies $\bigcup_i \SDDI(\SDD(f)(p_i)) = \P(Y)$.
  \item Let $\alpha = \{(p_i, s_i)\}_{i=1}^n \in \ZSDD(X)$.  We show that $\{p_i\}_i$ are pairwise disjoint, then so are $\{\ZSDD(f)(p_i)\}_i$.  This can be done in the same manner as above, except that $\SDDI$ and $\PPc$ are replaced by $\ZSDDI$ and $\PP$, respectively.  Notice that $\PP(f)$ preserves the intersection, which follows from the assumption that $f$ is injective.
    \qedhere
  \end{enumerate}
\end{proof}
This means that there are subfunctors of $\SDD$ taking only partition SDDs and strongly deterministic SDDs.  Similarly for strongly deterministic ZSDDs, but the domain of the functor should be the subcategory of $\Sets$ whose morphisms are injections.  In the same manner as Theorem~\ref{thm:odd-nat} and Corollary~\ref{cor:vt-nat}, $\SDDI$ and $\ZSDDI$ restrict to these subfunctors.

We could also define a partition ZSDD, but it appears that such a notion is, unlike strong determinism, not preserved by $\ZSDD(f)$.  By adapting the original definition of ZSDDs~\cite{Nishino:2016:ZSD:3015812.3015969} to the current context, we could define: a ZSDD $\alpha$ is a partition ZSDD with respect to a vtree $v$ if, for any of its decomposition $\beta$, there is a subtree $(v_1, v_2)$ of $v$ respected by $\beta$ and the primes of $\beta$ form a partition of $|v_1|$.  However, such a notion would not be preserved by a vtree embedding.  Indeed, consider $\alpha = \{(a, \emptycomb), (\emptycomb, b)\} \in \ZSDD(\{a, b\})$.  The primes $a$ and $\emptycomb$ of $\alpha$ denote $\{\{a\}\}$ and $\{\emptycomb\}$, whose union is $\P(\{a\})$.  Therefore this $\alpha$ is a partition ZSDD with respect to a vtree $(a, b)$.  However, if we consider a vtree $((a, c), b)$, into which $(a, b)$ can be embedded, primes of $\alpha$ do not form a partition.  For this reason, we do not further consider partition ZSDDs in this paper.

From this observation, we can conclude that both partition and strong determinism of Boolean functions (or SDDs) are well-behaved, but only strong determinism is so for combination sets (or ZSDDs).  Nishino et al.\ introduced a notion of implicit partition~\cite[Def.~7]{Nishino:2016:ZSD:3015812.3015969}, which is roughly the same as strong determinism.  The discussion above suggests that implicit partition would be a more sensible notion than partition, which was used in the first definition of ZSDDs.

\section{Conclusion}
\label{sec:concl}

We investigated the difference between BDDs and ZDDs, as well as their variants, and identified a formally stated theorem that captures the fact that a BDD represents a Boolean function and a ZDD represents a combination set.  This is done by observing that their
definitions reflect the actions on morphisms of two functors $\PPc$
and $\PP$, respectively.  In addition, we have observed that similar
result holds for SDD and ZSDD as well.

There are many other types of decision diagrams in the literature.  For example, a sequence BDD~\cite{Loekito2010} and $\pi$DD~\cite{Minato11:pidd} are proposed as representations of a set of sequences and a set of permutations, respectively.  It would be interesting to consider whether these data structures have similar naturality property with respect to appropriate functors.  It is also an interesting future work to investigate whether there exists a general principle to design a natural representation, when a class of data is specified as a functor.

\section*{Acknowledgments}

This work was supported by JST CREST Grant Number JPMJCR1401, Japan.

\ifelsarticle
\bibliographystyle{elsarticle-num}
\else
\bibliographystyle{plain}
\fi
\bibliography{dd}

\end{document}